\newcommand{\aaa}{\mathbf a}
\newcommand{\rrr}{\mathbf r}
\newcommand{\ooo}{\mathbf o}
\newcommand{\ppp}{\mathbf p}
\newcommand{\saa}{\mathbf (sa)}
\newcommand{\maa}{\mathbf (ma)}
\newcommand{\laa}{\mathbf (la)}
\newcommand{\soo}{\mathbf (so)}
\newcommand{\moo}{\mathbf (mo)}
\newcommand{\loo}{\mathbf (lo)}
\DeclareMathOperator{\mina}{minang}
\DeclareMathOperator{\maxa}{maxang}
\newtheorem{thm}{Theorem}
\newtheorem{lem}[thm]{Lemma}
\newtheorem{prop}[thm]{Proposition}
\newtheorem{conj}[thm]{Conjecture}
\theoremstyle{remark}
\newtheorem*{remark}{Remark}
\title{Dissecting the square into seven or nine congruent parts}
\author[G. L. Maldonado]{Gerardo L. Maldonado}
\address[G. L. Maldonado]{Centro de Ciencias Matemáticas, UNAM Campus Morelia, Morelia, Mexico}
\email{gmaldonado@matmor.unam.mx}
\author[E. Roldán-Pensado]{Edgardo Roldán-Pensado}
\address[E. Roldán-Pensado]{Centro de Ciencias Matemáticas, UNAM Campus Morelia, Morelia, Mexico}
\email{e.roldan@im.unam.mx}
\keywords{Tiling; Congruent; Equiangular; Computational geometry}
\begin{document}

\begin{abstract}
	We give a computer-based proof of the following fact: If a square is divided into seven or nine convex polygons, congruent among themselves, then the tiles are rectangles. This confirms a new case of a conjecture posed first by Yuan, Zamfirescu and Zamfirescu and later by Rao, Ren and Wang. Our method allows us to explore other variants of this question, for example, we also prove that no rectangle can be tiled by five or seven congruent non-rectangular polygons.
\end{abstract}

\maketitle

\section{Introduction}

Let $P$ and $T$ be convex polygons. We say that $P$ can be tiled by $n$ copies of $T$ if there are convex polygons $T_1\dots,T_n$, all congruent to $T$ such that $P=\bigcup T_i$ and the $T_i$ have disjoint interiors.

When can a polygon $P$ be tiled by $n$ copies of $T$? In this paper we are mainly interested in the case when $P$ is either a square or a rectangle.

It is easy to see that a square can always be tiled by $n$ congruent rectangles. This can be done by dividing the square by $n-1$ vertical lines. When $n$ is not prime, there are many other ways in which this can be done and tiles need not be constructed with vertical lines. However, it is not known if there is an odd number $n$ and a non-rectangular tile $T$ for which a square can be tiled by $n$ copies of $T$. The standing conjecture, as stated in \cite{RRW2020}, is as follows.

\begin{conj}\label{conj}
	If $n$ is an odd positive integer, then a square can be tiled by $n$ congruent copies of a convex polygon $T$ only if $T$ is a rectangle.
\end{conj}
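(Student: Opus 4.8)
The plan is to prove the conjecture by induction on nothing in particular but rather by stratifying according to the number of edges $m$ of the tile $T$, reducing everything to rigidity statements about the shared angle- and edge-data of the tiles. Normalize the square $P$ to have side $1$ (hence area $1$), so that every tile has area exactly $1/n$, and record that because all tiles are congruent they share a common angle multiset $\{\alpha_1,\dots,\alpha_m\}$ with $\sum_j \alpha_j=(m-2)\pi$ and a common edge-length multiset $\{\ell_1,\dots,\ell_m\}$. The base case $m=3$ is already decisive: congruent triangular tiles have equal area $1/n$ and form a dissection of the unit square, so Monsky's theorem on equidissections immediately rules out odd $n$. Thus the entire content of the conjecture lives in the range $m\ge 4$, and the goal becomes to show that any admissible such dissection forces $m=4$ with all $\alpha_j=\pi/2$.

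First I would extract the two global identities that every dissection must satisfy. Counting the interior angles of all tiles in two ways — once as $n(m-2)\pi$ and once by summing around the vertices of the arrangement, where the four corners of $P$ contribute $\tfrac\pi2$ each, the $b$ tile-vertices in the relative interior of a side of $P$ contribute $\pi$ each, and the $i$ interior vertices contribute $2\pi$ each — gives
$$ n(m-2) = 2 + b + 2i. $$
Since $n$ is odd this forces $b\equiv m \pmod 2$, a first constraint on how tiles may meet the boundary. Pairing this with the length identity $np = 4 + 2L$, where $p$ is the common perimeter and $L$ is the total length of interior edges, together with the area normalization, cuts the admissible combinatorial types down sharply and, crucially, ties the parity data to $n$.

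The heart of the argument — and the reason oddness should be exploitable beyond the triangle case — should be a $2$-adic valuation/Sperner-type invariant generalizing the mechanism behind Monsky's theorem. I would fix a non-Archimedean valuation extending $v_2$ to $\R$, triangulate each tile from a single reference vertex, and three-color all vertices of the refined arrangement by their valuations, so that Sperner's lemma produces rainbow triangles whose areas have controlled $2$-adic valuation. The congruence hypothesis is what I would use to make the coloring coherent across tiles: the shared edge vectors of $T$ force the valuations of corresponding vertices to differ by lattice translations, so the color pattern is rigid and the count of boundary rainbow triangles acquires a fixed parity while the interior count is governed by $n$. The aim is to show that a non-rectangular $T$ (some $\alpha_j\ne\pi/2$, detected through the edge vectors) produces a rainbow triangle of forbidden area, contradicting oddness.

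The hard part will be exactly this coupling. Monsky's argument is tailored to triangles of \emph{equal} area meeting edge-to-edge, whereas here, after triangulating an $m$-gon for $m\ge4$, the resulting triangles are neither congruent nor equal in area, and the tiles may meet non-edge-to-edge, so the clean "rainbow triangle has area of negative valuation" dichotomy breaks down. Making the valuation respect the shared edge-vector lattice of $T$ uniformly in both $m$ and $n$ is the genuine obstacle, and I do not expect it to follow from the finite case-analysis that settles $n=7,9$. A fallback is to prove the rigidity conclusion — that every admissible vertex angle equals $\pi/2$, forcing $m=4$ — directly from the local identities above for each combinatorial type; but promoting such a per-type analysis to a statement valid for all odd $n$ simultaneously is precisely where a new idea, rather than a computer search, will be needed.
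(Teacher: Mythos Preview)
This statement is a \emph{conjecture}, not a theorem: the paper does not prove it in general, and indeed no proof is known. The paper's contribution is a computer-assisted verification for the two specific values $n=7$ and $n=9$, obtained by enumerating the possible $3$-connected planar graphs associated to a tiling and eliminating each one via linear and nonlinear constraints on the angles and side-lengths of $T$. Your proposal is not a proof of the conjecture either, and you say so yourself: the passage beginning ``The hard part will be exactly this coupling'' is an explicit admission that the central step is missing.

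The concrete gap is the attempted extension of Monsky's $2$-adic argument to $m\ge 4$. Monsky's proof works because each tile \emph{is} a triangle of area $1/n$, so a rainbow triangle in the Sperner coloring is itself a tile and its area has a forced valuation. Once you triangulate an $m$-gon from a reference vertex, the sub-triangles have areas that are arbitrary real numbers summing to $1/n$; nothing prevents a rainbow sub-triangle from having area of any $2$-adic valuation whatsoever, so the contradiction evaporates. Your suggestion that congruence makes the coloring ``coherent across tiles'' because ``shared edge vectors of $T$ force the valuations of corresponding vertices to differ by lattice translations'' is not correct: congruent copies are related by general isometries (rotations and reflections, not just translations), so there is no lattice of edge vectors, and the valuation of a rotated vector bears no simple relation to that of the original. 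The fallback you mention---proving directly from the vertex-angle identities that all angles are $\pi/2$---is exactly what the paper does, but only for fixed small $n$ and by exhaustive computer search over combinatorial types; you correctly identify that no uniform argument is known, and your proposal does not supply one.
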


This conjecture, for $n=3$, was posed as a problem by Rabinowitz in the journal \textit{Crux Mathematicorum} and was answered by Maltby \cite{Mal1991}. Maltby later generalized his result by showing that it is impossible to tile a rectangle by $3$ copies of $T$ unless $T$ is also a rectangle \cite{Mal1994}.

For $n=5$, Conjecture \ref{conj} was verified by Yuan \textit{et al.} \cite{YZZ2016}. They attribute a similar problem to Danzer, who conjectured that a square may not be tiled by $5$ congruent polygons (convex or not), except when $T$ is a rectangle. Danzer's conjecture remains open. They also posed Conjecture \ref{conj} for prime numbers $n$.

Apart from these two special cases, it is known that if a square can be tiled by an odd number of copies of $T$, then $T$ cannot be a triangle. This follows from the work of Thomas and Monsky \cite{Tho1968,Mon1970} who, independently, proved that a square cannot be tiled by an odd number of triangles with the same area. Recently Rao \textit{et al.} showed that $T$ may not have more than $6$ sides and that $T$ may not be a right-angle trapezoid \cite{RRW2020}.

In this paper we give a computer-based proof of the validity of Conjecture \ref{conj} for $n=7$ and $n=9$.

\begin{thm}\label{thm}
	Let $n=7$ or $n=9$, then a square cannot be tiled by $n$ copies of a convex polygon $T$ unless it is a rectangle.
\end{thm}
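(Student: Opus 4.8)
The plan is to reduce Theorem~\ref{thm} to a finite collection of cases that a computer can check. Suppose, for contradiction, that a square $S$ — which we normalize to have unit area — is tiled by $n$ congruent copies of a convex polygon $T$ that is not a rectangle, with $n=7$ or $n=9$. Each tile then has area $1/n$. By the Thomas--Monsky theorem \cite{Tho1968,Mon1970} a square cannot be cut into an odd number of triangles of equal area, so $T$ is not a triangle; and by the results of Rao, Ren and Wang \cite{RRW2020} the polygon $T$ has at most six sides and is not a right-angle trapezoid. Hence $T$ is a non-rectangular quadrilateral, pentagon or hexagon; write its interior angles as $\alpha_1\le\dots\le\alpha_k$ with $k\in\{4,5,6\}$, so that $\sum_i\alpha_i=(k-2)\pi$, and write its side lengths as $\ell_1,\dots,\ell_k$.

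\textbf{Combinatorial skeletons.}
A tiling of $S$ by $n$ tiles determines a subdivision of $S$ into $n$ convex polygonal $2$-cells. Every vertex of this subdivision is a genuine corner of at least one tile, so the number of vertices is at most $kn\le 6n$; by convexity of the cells together with Euler's formula, the numbers of edges and of incidences are then bounded in terms of $n$ as well. Consequently there are only finitely many \emph{combinatorial skeletons} — planar subdivisions of a square into $n$ topological disks subject to these bounds, taken up to isomorphism — and I would enumerate them all. This is the first, and largest, computational ingredient.

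\textbf{Angle and length systems.}
For a fixed skeleton, and for each way of specifying to which vertex and which edge of the skeleton every corner and side of every tile is sent, the configuration is constrained as follows. Each vertex $v$ contributes one linear equation on the $\alpha_i$: the angles of the tiles at $v$ must sum to the total angle of $S$ at $v$, namely $2\pi$ if $v$ is interior to $S$, $\pi$ if $v$ lies on an open side of $S$, and $\pi/2$ if $v$ is a corner of $S$ — where a tile meeting $v$ in the relative interior of one of its own sides is counted as contributing the angle $\pi$. Together with $\sum_i\alpha_i=(k-2)\pi$ and the requirement that the tiles be mutually congruent, these equations carve out an affine family of admissible angle vectors, and the great majority of labelings are already eliminated at this purely linear stage. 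On the survivors I would impose the edge-matching relations (each side of the skeleton is partitioned into sides of tiles whose lengths must sum correctly) and then realize the whole picture through the coordinates of the vertices, obtaining a semialgebraic system. If, over every skeleton and every labeling, this system is either infeasible or forces $\alpha_1=\dots=\alpha_k=\pi/2$ — which can occur only when $k=4$, in which case it says exactly that $T$ is a rectangle — then no non-rectangular $T$ can occur, contradicting our assumption and proving the theorem.

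\textbf{The main difficulty.}
The obstacle is not a single clever idea but the scale of the search and the need to keep it rigorous. One must (i) produce a \emph{provably complete} list of combinatorial skeletons, including degenerate ones in which several tile-corners coincide or several tile-sides become collinear, and (ii) decide the feasibility of each resulting semialgebraic system \emph{soundly}, so that no genuine tiling is overlooked and no spurious one is accepted. Both steps are carried out by machine, with step (ii) done in exact arithmetic (or in certified interval arithmetic) so that rounding cannot corrupt the conclusion. I expect step (i) — the exhaustive, certified enumeration of skeletons — to be the real bottleneck, both in design and in running time; once the combinatorics is pinned down, the algebra in step (ii), though voluminous, is routine.
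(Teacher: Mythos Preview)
Your plan is essentially the paper's: reduce to $k\in\{4,5,6\}$, enumerate the finitely many combinatorial configurations, and for each one rule out a non-rectangular realization by solving the angle and length constraints. The differences are in the implementation, and they matter. For step~(i) the paper does not enumerate primal subdivisions directly; instead it passes to the \emph{dual} graph (tiles and sides of the square as vertices, adjacency along positive-length segments as edges), proves this graph is $3$-connected and planar, and then generates all candidates with \texttt{plantri}. This is exactly the certified-enumeration device you flag as the bottleneck but do not supply. For step~(ii) the paper does not set up coordinates or call a general semialgebraic solver; it first coarsens each angle to an \emph{angle-type} in $\{\aaa,\rrr,\ooo,\ppp\}$ and propagates the resulting inequalities around tiling-vertices, which prunes the labeling tree drastically before any equations are written. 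Only then are the linear angle/side systems solved symbolically (via \texttt{Sympy}), supplemented in the quadrilateral case by the explicit area and diagonal identities and by two ad~hoc geometric lemmas (Lemmas~\ref{lem:geom} and~\ref{lem:aror}) that dispatch the stubborn $\aaa\rrr\ooo\rrr$ configurations. So your outline is sound, but the paper's two concrete ideas---the $3$-connected dual graph for enumeration, and angle-type propagation for pruning---are what turn the plan into a proof that runs in reasonable time; a direct attack via coordinates and semialgebraic feasibility would be correct in principle but is not what is done here.
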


Using the same techniques, we are able to prove the following result for rectangles.

\begin{thm}\label{thm:rect}
	Let $n=5$ or $n=7$, then no rectangle can be tiled by $n$ copies of a convex polygon $T$ unless it is also a rectangle.
\end{thm}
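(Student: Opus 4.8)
The plan is to run the same computer‑assisted enumeration that underlies Theorem~\ref{thm}, now carrying the aspect ratio of the rectangle as an additional unknown. Suppose, towards a contradiction, that a rectangle $R$ admits a tiling by $n$ congruent copies $T_1,\dots,T_n$ of a convex polygon $T$ that is not a rectangle, with $n\in\{5,7\}$. First I record the structural facts that make the problem finite. The tile $T$ has boundedly many sides, say $k\le 6$, by the reasoning of \cite{RRW2020}, which is local and applies here unchanged; every vertex of the tiling is a vertex of some $T_i$, since the tiles meeting at an interior point, at a relative‑interior point of a side of $R$, or at a corner of $R$ cannot all have that point in the relative interior of one of their edges; and, because a convex polygon has at most one corner at any given point, at most $n$ tiles meet at a vertex. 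Consequently the tiling has at most $nk\le 42$ vertices, at most $n+1\le 8$ faces, and, by Euler's formula, a bounded number of edges, so the combinatorial data of the tiling ranges over a finite (if large) set.

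Next I would enumerate the combinatorial types of the tiling: planar subdivisions of a rectangle into $n$ convex faces with at most $42$ vertices, decorated by the data of which vertex of $T$, and which edge of $T$, is placed at each vertex, respectively along each edge, of each tile, together with one bit per tile recording whether its placement reverses orientation (since congruences may be improper). For a fixed combinatorial type, summing the relevant angles of $T$ around each vertex yields linear conditions $\sum_i m_i\alpha_i\in\{2\pi,\pi,\pi/2\}$ on the interior angles $\alpha_1,\dots,\alpha_k$ of $T$, with $m_i\ge 0$ and $\sum_i m_i\le n$, which together with $\sum_i\alpha_i=(k-2)\pi$ and $0<\alpha_i<\pi$ cut out a bounded polyhedron of admissible angle vectors; many combinatorial types are discarded already because this polyhedron is empty or forces every $\alpha_i=\pi/2$. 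For the survivors, requiring that the edges of adjacent tiles match in length, that edges meeting at a flat vertex be collinear, and that the outer boundary have four right angles turns the realization problem into a system of polynomial equations and strict polynomial inequalities in the vertex coordinates, equivalently in the shape parameters of $T$, the $n$ planar isometries placing the $T_i$, and the aspect ratio of $R$. I would then discharge each such system by exact computer algebra — Gröbner bases or triangular decompositions, working over a cyclotomic field in the frequent case that the $\alpha_i$ are rational multiples of $\pi$ — or, when that is too costly, by a certified numerical search; in every case the only solutions have all $\alpha_i=\pi/2$, i.e.\ $T$ a rectangle, contrary to assumption.

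The main obstacle is the sheer size of this enumeration: although the number of combinatorial types is finite by the counting above, it is far too large to list naively, so the search must be organised as an incremental construction of the tiling with strong pruning at every step — exploiting the dihedral symmetry of $R$, the rigidity propagated from the four right‑angled corners, the equalities forced among edge lengths as tiles are attached one by one, and the a priori restrictions on the shape of $T$. A secondary but real difficulty, and the reason the method reaches $n=9$ for the square in Theorem~\ref{thm} but only $n=7$ for the rectangle here, is that the aspect ratio is a genuine extra parameter: it both widens the branching and raises the degree and dimension of the polynomial systems to be solved. Finally, some care is needed to certify that the enumeration is exhaustive, and this is precisely where the explicit bounds — at most $n$ tiles per vertex, at most $nk$ vertices, at most six sides of $T$ — are indispensable, since they guarantee that no configuration escapes the search.
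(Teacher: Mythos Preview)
Your proposal is correct in overall strategy and matches the paper at the highest level: a finite computer-assisted enumeration of candidate tilings, carrying the aspect ratio $r$ as an extra unknown, followed by algebraic elimination of each surviving case. The paper, however, organises the search quite differently, and those organisational choices are what make the computation feasible. Rather than enumerate planar subdivisions directly (bounded, in your scheme, by an $nk\le 42$ vertex count), the paper passes to the \emph{dual} graph $G$ on only $n+4$ vertices, proves it is $3$-connected (Proposition in Section~\ref{sec:G}), and uses \texttt{plantri} to list all such polyhedral graphs together with their distinguished $4$-cycles $S$; this is a dramatically smaller search space. Instead of assembling full polynomial systems and invoking Gr\"obner bases or certified numerics, the paper first coarsely labels each angle of $T$ by its \emph{type} ($\aaa$, $\rrr$, $\ooo$, $\ppp$), prunes by the simple $\mina$/$\maxa$ inequalities this forces at every tiling-vertex, and then solves only \emph{linear} systems --- one in the angles $\alpha_i$, a separate one in the side-lengths $t_i$ with $r$ adjoined as a variable --- symbolically via \texttt{Sympy}; the non-linear area and diagonal constraints are checked only at the very end, numerically with a tolerance, and a handful of residual cases are killed by the bespoke Lemmas~\ref{lem:geom} and~\ref{lem:aror}. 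Your brute-force scheme would work in principle, but the paper's dual-graph enumeration and the angle-type filter are precisely the devices that keep the branching under control; even with them, the paper reports that the method already fails to terminate for rectangles at $n=9$, which is why Theorem~\ref{thm:rect} stops at $n=7$.
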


Using slight modifications to our method, we are able to classify all tilings of the square using non-rectangular equiangular convex polygons. Here, two convex polygons are \emph{equiangular} if there is a bijection between their vertices, respecting the order of the vertices, such that the angles at corresponding vertices be equal.

\begin{thm}\label{thm:equiang}
	There are $31$ ways (in the sense described in Section \ref{sec:G}) in which a square can be tiled by $5$ non-rectangular equiangular convex polygons.
\end{thm}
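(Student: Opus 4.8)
The plan is to re-use the pipeline behind Theorems~\ref{thm} and~\ref{thm:rect} with two modifications: replace the congruence conditions (matching both the angles and the edge lengths between tiles) by the weaker equiangularity conditions (matching only the cyclic sequence of angles), and, instead of certifying that every combinatorial configuration is geometrically unrealizable, record and count those that \emph{are} realizable.

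First I would bound the common number of sides $k$ of the five tiles. A tiling of the square $Q$ by five equiangular convex polygons determines a plane graph whose five bounded faces are the tiles; combining Euler's formula for this graph with the angle balance at the vertices ($\pi/2$ at each of the four corners of $Q$, $\pi$ along the open edges of $Q$ and at T-junctions, $2\pi$ at the remaining interior vertices), together with the facts that the total interior angle over the five tiles equals $5(k-2)\pi$ and that a convex tile contributes less than $\pi$ at each of its corners, bounds both the number of tiling vertices and $k$; one obtains $k\le 5$ (compare the bound $\le 6$ of Rao \emph{et al.}\ in the congruent setting). Hence the common tile is a triangle, a non-rectangular quadrilateral, or a pentagon, and, with the number of tiles fixed at five, the enumeration machinery of Section~\ref{sec:G} produces a finite list of candidate combinatorial types: plane graphs decorated with the assignment of one of the common angles $\theta_1,\dots,\theta_k$ to each tile-corner and with the marking of which vertices lie on $\partial Q$, considered up to the symmetries of the square.

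Next, for each candidate type $G$ I would write down a semialgebraic feasibility problem whose unknowns are the coordinates of the tiling vertices (with $Q$ normalized) and the entries $\theta_1,\dots,\theta_k$, and whose constraints are: each tile-corner realizes its prescribed angle (equiangularity); $0<\theta_i<\pi$ and $\sum\theta_i=(k-2)\pi$ (convexity and non-degeneracy); boundary vertices lie on the correct edge of $Q$, and shared or collinear edges glue up consistently; the angles around every interior vertex sum to $2\pi$ (resp.\ to $\pi$ along $\partial Q$); and the rectangular solution $\theta_1=\dots=\theta_k=\pi/2$ is excluded. These systems are handed to the same solver (symbolic elimination backed by interval arithmetic) used for the earlier theorems, the difference being that now a nonempty solution set is the outcome of interest. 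Each type $G$ admitting a valid realization contributes one tiling in the sense of Section~\ref{sec:G}, and summing the surviving types, after quotienting by the dihedral symmetry of $Q$, yields the claimed total of $31$. As a sanity check, none of these should admit a realization in which all five tiles are congruent, in agreement with the $n=5$ case of Conjecture~\ref{conj} established in \cite{YZZ2016}.

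The main obstacle, familiar from the earlier theorems but amplified by the switch from refutation to classification, is keeping the enumeration of combinatorial types exhaustive and non-redundant. Degenerations must be controlled carefully: two tiling vertices colliding, an interior angle tending to $\pi$ so that a $k$-gon collapses to a $(k-1)$-gon, a T-junction opening into an ordinary vertex — each of these must either be excluded a priori by the inequality constraints or reappear as a bona fide lower-$k$ type, so that no tiling is lost at the boundary of parameter space and none is counted twice. A secondary, more bookkeeping-intensive point is that, because we are listing tilings rather than ruling them out, for each feasible type one must also inspect the structure of its solution set, confirming that the valid realizations form a single continuous family, so that the count ``one realizable type $=$ one way'' is indeed the correct one.
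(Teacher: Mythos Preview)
Your high-level strategy---drop the side-length constraints, run the enumeration, and count the combinatorial types that survive rather than certify that none do---is exactly what the paper does. The implementation, however, diverges. The paper never introduces vertex coordinates or a general semialgebraic solver: since only equiangularity is at stake, it works entirely with the \emph{linear} equations in the angles $\alpha_1,\dots,\alpha_k$ coming from the vertex-sum conditions, and it speeds up the tree search by refining the angle-type alphabet in the triangular case (splitting $\aaa$ into $\saa,\maa,\laa$ according to comparison with $\pi/4$, and similarly for $\ooo$). In place of your Euler-formula bound $k\le 5$, the paper supplies a short geometric lemma tailored to the equiangular setting (constraints on a triangular tile touching opposite sides, and on a quadrilateral tile spanning three sides) to prune the search; the outcome is that only $k=3$ and $k=4$ contribute. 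Finally, the paper's algorithm does not fully decide realizability: it outputs $15$ candidate graphs for triangles and $27$ for quadrilaterals, of which $3$ and $8$ respectively are then eliminated by hand, leaving $12+19=31$. Your coordinate-based feasibility test would in principle automate that last step and address your worry about continuous families, at the cost of a much heavier system to solve; the paper's angle-only approach is lighter but leaves a small manual residue.
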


Having this list, we are able to answer the first three open problems stated at the end of \cite{YZZ2016}.

The rest of the paper is devoted to proving these theorems. The main layout is as follows. We assume that a square or rectangle $P$ can be tiled by $n$ copies of a convex polygon $T$ which is not a rectangle. In Section \ref{sec:combi} we deduce properties that the tiles should have. In Section \ref{sec:G} we construct a polyhedral graph associated to a tiling of $P$. In Section \ref{sec:algorithm} we describe the algorithm we used to see which of these graphs could be obtained by a tiling of $P$ by congruent tiles. If there are no such graphs then the tiling does not exist. In Section \ref{sec:equiangSec} we mention the modifications needed to prove Theorem \ref{thm:equiang} and give several examples of such tilings. We give some final remarks in Section \ref{sec:remarks}.

The code we used is available at \url{https://github.com/XGEu2X/TilingSquare/}. The \verb|Readme.md| file in this repository contains a small explanation of how the code works. It is possible to print out parts of the proof to see the steps taken to reach our conclusions.

\section{Properties of tilings}\label{sec:combi}

We start by listing four properties that the tiles should have if they are to successfully tile a unit square. The first three properties are combinatorial and help to reduce the number of cases that must be analyzed later. The last property is geometric and is used later in the algorithm.

\begin{lem}\label{lem}
	Let $n\ge 3$ be an odd integer and let $P$ be a rectangle. Let $s_0,s_1,s_2,s_3$ be the (closed) sides of $P$ ordered cyclically, where the indices are taken mod $4$. If $P$ can be tiled by $n$ copies $T_1,\dots, T_n$ of a convex body $T$, then the following hold:
	\begin{enumerate}
		\item $T$ has at least $4$ sides.
		\item If a tile $T_i$ intersects two consecutive sides $s_k$ and $s_{k+1}$ of $P$, then $T_i$ contains the vertex of $P$ common to $s_k$ and $s_{k+1}$.
	\end{enumerate}
	Furthermore; if $P$ is a square, $T$ has exactly $4$ sides and is not a rectangle then:
	\begin{enumerate}
		\item[(3)] A tile $T_i$ cannot have two sides $a,b$ such that $a\subset s_k$ and $b\subset s_{k+2}$ for some $k$.
		\item[(4)] $T$ does not have two consecutive right angles.
	\end{enumerate}
\end{lem}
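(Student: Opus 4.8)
The plan is to treat the four claims separately, since they use rather different ideas.

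For claim (1), I would argue by counting. Suppose $T$ has at most $3$ sides, so $T$ is a triangle (it cannot have fewer than $3$). Then all tiles are triangles of equal area $1/n$, and since $n$ is odd this is impossible by the Monsky–Thomas theorem cited in the introduction. (A triangle tiling of a rectangle is in particular a triangle tiling of a square after an affine transformation, or one applies Monsky's theorem directly to rectangles; alternatively, Monsky's result already rules out the square case, and the rectangle case follows by the same valuation argument.)

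For claim (2), the key idea is a local angle/convexity argument at the vertex $v$ of $P$ common to $s_k$ and $s_{k+1}$. The interior angle of $P$ at $v$ is $\pi/2$. Suppose a tile $T_i$ meets both $s_k$ and $s_{k+1}$ but does not contain $v$. Then $T_i$ contains a point $x\in s_k$ and a point $y\in s_{k+1}$, both distinct from $v$, and by convexity the whole segment $[x,y]$ lies in $T_i$. The triangle with vertices $x$, $y$, $v$ is then a region near the corner that is ``cut off'' by $T_i$; I would show that the part of this triangle near $v$ cannot be covered by $T_i$ (since $v\notin T_i$ and $T_i$ is closed and convex, a neighbourhood of $v$ is disjoint from $T_i$) and cannot be covered by any other tile either, because any other tile meeting that neighbourhood would have to cross the segment $[x,y]\subset \partial T_i$ into the interior of $T_i$. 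More carefully: a small disc around $v$ intersected with $P$ is a quarter-disc; $T_i$ occupies a positive-angle wedge of it touching both radii $s_k$, $s_{k+1}$ but missing the centre $v$, which is geometrically impossible for a convex set — a convex set meeting both sides of the right angle and lying in the quarter-disc, if it misses $v$, leaves an uncoverable sliver. This is the routine part; the clean way is: let $v' \in [x,y]$ be the foot of the perpendicular direction, note $v$ is separated from $T_i$ by the line through $x,y$, so the open triangle $vxy$ minus $T_i$ is nonempty near $v$, and any tile covering points arbitrarily close to $v$ inside this triangle must enter the interior of $\mathrm{conv}\{v,x,y\}\subset T_i$, contradiction.

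For claims (3) and (4) I would use the hypothesis that $T$ is a non-rectangular quadrilateral, and exploit the rigidity of having sides lying along the sides of the square. For (3): if $T_i$ has a side $a\subset s_k$ and a side $b\subset s_{k+2}$, then $a$ and $b$ are parallel (opposite sides of the square), so $T_i$ is a quadrilateral with a pair of parallel sides, i.e. a trapezoid, whose height equals the full side length $1$ of the square. Then the two remaining sides of $T_i$ connect $s_k$ to $s_{k+2}$; I would analyse the angles these make and show that matching copies of $T$ around the tiling forces $T$ to be a rectangle — the point being that the strip between $s_k$ and $s_{k+2}$ has width $1$, the tile spans it completely, and the only way for congruent copies of a trapezoid of this maximal height to coexist in the square is if the non-parallel sides are in fact perpendicular to $s_k$. (One counts: the slanted sides, being the longest dimension across a unit-width strip, together with congruence constraints at the boundary, leave no room.) For (4): suppose $T$ has two consecutive right angles; then the side joining those two vertices is perpendicular to both adjacent sides, so $T$ is a right trapezoid. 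But Rao, Ren and Wang \cite{RRW2020} proved that a square cannot be tiled by congruent copies of a right-angle trapezoid, which immediately gives the contradiction. Alternatively one gives a direct argument: two consecutive right angles make $T$ a right trapezoid with the two parallel sides of different lengths (else $T$ is a rectangle); the valuation/area-colouring technique or a direct boundary-matching argument then rules this out.

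The main obstacle I anticipate is claim (3): unlike (1) and (4), which reduce to quoted theorems, and (2), which is a soft local argument, (3) seems to require genuinely using the global constraint that the ambient shape is a \emph{square} (width $=$ side length) together with the classification-flavoured fact that congruent copies of a full-height trapezoid tile the square only trivially. Getting the angle bookkeeping right — ruling out that the slanted sides of the trapezoid could alternate orientation and still pack — is the delicate step, and I would expect the cleanest route to be a careful case analysis on how the two non-parallel sides of $T_i$ are oriented relative to $s_{k+1}$ and $s_{k+3}$, combined with the observation that any tile sharing the slanted side of $T_i$ must be a reflected copy, propagating a contradiction across the square.
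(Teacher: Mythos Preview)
Your treatment of parts (1) and (4) is fine and matches the paper: both are handled by citation (Monsky--Thomas for (1), with the affine reduction from rectangles to squares; Rao--Ren--Wang for (4)).

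The genuine gap is in part (2). Your argument is a purely local covering argument near the corner $v$, and never uses the hypothesis that the tiles are congruent. But the statement is simply false without congruence: take a tiny right triangle sitting in the corner $v$, and let some other tile touch $s_k$ and $s_{k+1}$ at points further out while avoiding $v$; nothing local prevents this, and the corner region is covered. Concretely, your line ``any tile covering points arbitrarily close to $v$ inside this triangle must enter the interior of $\conv\{v,x,y\}\subset T_i$'' is wrong on its face, since $v\notin T_i$ means $\conv\{v,x,y\}\not\subset T_i$, and another tile can perfectly well occupy the open triangle $vxy$. The paper's argument is entirely different and hinges on congruence via \emph{diameter}: if $T_i$ meets $s_k$ and $s_{k+1}$ but misses their common corner, then either $T_i$ hits two opposite corners of $P$ (so $\diam T_i=\diam P$, but some component of $P\setminus T_i$ lies strictly inside a half of $P$ cut by a diagonal, forcing any tile there to have smaller diameter), or $T_i$ misses the opposite corners too, in which case the tile $T_j$ that does contain the corner $v$ is trapped and has $\diam T_j<\diam T_i$. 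Either way congruence fails.

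For part (3) your plan is in the right spirit but much harder than necessary, and you correctly flag it as the obstacle. The paper avoids any packing analysis of full-height trapezoids by a short dichotomy on the corners of $P$. If some corner of $P$ is covered by a single tile, then $T$ has a right angle; since $T_i$ has two parallel sides (those lying on $s_k$ and $s_{k+2}$), the supplementary angle on the other parallel side is also right, giving two consecutive right angles and contradicting (4). If instead every corner of $P$ meets at least two tiles, then $T$ has an interior angle at most $\pi/4$; but a convex quadrilateral with two vertices on $s_k$ and two on $s_{k+2}$ inside the unit square has every interior angle strictly greater than $\pi/4$ (any side joining $s_k$ to $s_{k+2}$ has vertical drop $1$ and horizontal run less than $1$), a contradiction. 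This two-line reduction replaces your proposed case analysis on slanted-side orientations entirely.
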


\begin{figure}
	\centering
	\includegraphics[scale=0.75]{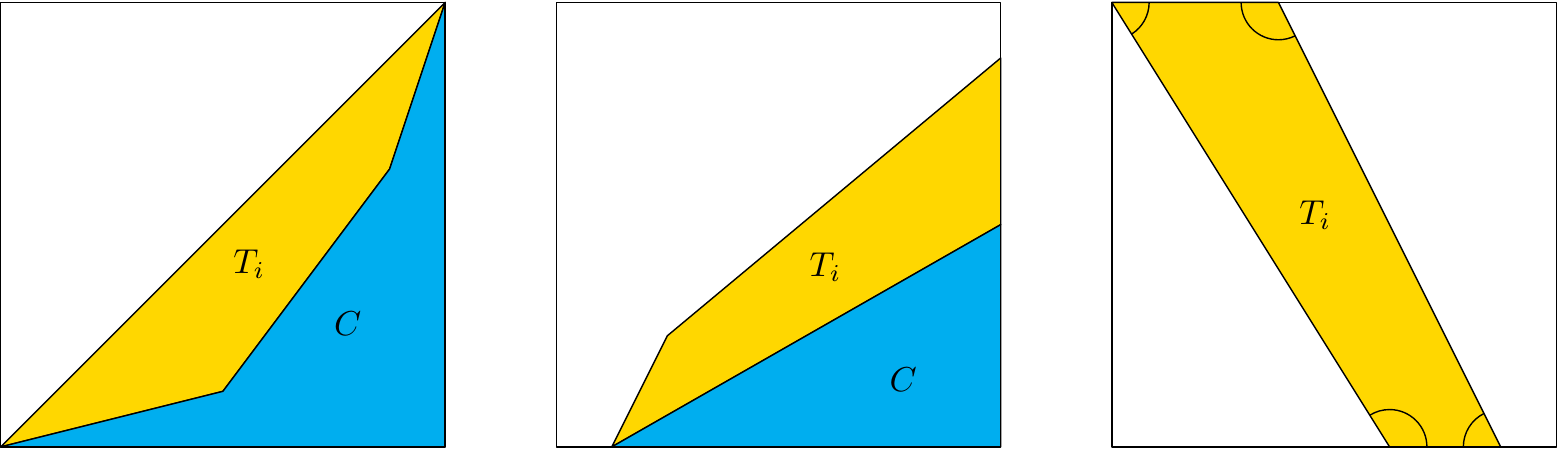}
	\caption{Three types of impossible tiles corresponding to parts (2) and (3) of Lemma \ref{lem}. In the first two pictures, the diameter of any convex body contained in $C$ is smaller than the diameter of $T_i$. In the last picture, the interior angles of $T_i$ are all greater than $\pi/4$.}
	\label{fig:lemma}
\end{figure}

\begin{proof}
	As mentioned before, if $P$ is a square then (1) follows immediately from the main results in \cite{Tho1968,Mon1970}. If $P$ is a rectangle then we can apply a linear transformation which sends it to a square. Since linear transformations preserve areas, the same result holds.
	
	Part (4) is Theorem 1.2 in \cite{RRW2020}.
	
	To prove (2), assume that $T_i$ intersects sides $s_k$ and $s_{k+1}$ but does not intersect the corner contained in $s_k\cap s_{k+1}$. There are two cases to consider here.
	If $T_i$ touches two opposite corners of $P$, as in Figure \ref{fig:lemma} (left), then the diameter of $T_i$ equals the diameter of $P$. Consider the connected components of $P\setminus T_i$: there are at least two of them or $T_i$ would cover half of $P$. One of these components must be properly contained in one of the right triangles obtained by splitting $P$ by its diagonal, but any tile $T_j$ contained in this component must have diameter smaller than $P$. This leads to a contradiction.
	The argument when $T_i$ does not touch two opposite corners of $P$, as in Figure \ref{fig:lemma} (middle), is similar. Let $T_j$ be a tile containing the corner of $P$ common to $s_k$ and $s_{k+1}$. Then the diameter of $T_j$ must be smaller than the diameter of $T_i$, leading to a contradiction.
	
	Finally, assume that (3) does not hold. We consider two cases, the first is when some corner of $P$ is incident with exactly one tile, then $T$ must have a right angle. Since $T_i$ has two parallel sides, it has two consecutive right angles which contradicts (4). The second case is when every corner of $P$ is incident with at least two tiles. Then $T$ must have an interior angle of at most $\pi/4$. But since $T_i$ has two vertices on $s_k$ and its other two vertices on $s_{k+2}$, all of its interior angles are strictly greater than $\pi/4$ which is a contradiction. This is exemplified in Figure \ref{fig:lemma} (right).
\end{proof}

\section{The graph associated to a tiling}\label{sec:G}

In this section we construct a polyhedral graph that contains the combinatorial structure of a tiling of the square.

Let $T_1,\dots,T_n$ be convex polygons which tile a square or rectangle $P$ with closed sides $S_1,S_2,S_3,S_4$ in cyclic order.

Construct a graph $G=(V,E)$, where $V=\{S_1,\dots,S_4,T_1,\dots,T_n\}$. The edges of this graph are defined by the following rules:
\begin{itemize}
	\item $\{A,B\}\in E$ if $A,B\in V$ and $A\cap B$ is a segment of positive length.
	\item $\{S_i,S_j\}\in E$ if $i-j\equiv 1\pmod{4}$.
\end{itemize}

$G$ contains all the combinatorial properties of the tiling.

We can think of $G$ as the dual of a pyramid with a square or rectangular base in which the base has been tiled in the same way as $P$. Furthermore, the following proposition gives us one more useful property.

\begin{prop}
	The graph $G$ is $3$-connected.
\end{prop}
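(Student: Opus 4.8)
The plan is to verify the definition of $3$-connectedness directly: I must show that $G$ has at least $4$ vertices (which is immediate, since $n\ge 3$ gives $|V|=n+4\ge 7$) and that removing any two vertices leaves $G$ connected. Equivalently, by Menger's theorem, I would show that between any two vertices of $G$ there are three internally disjoint paths. The geometric picture to keep in mind is the one the text already suggests: $G$ is (essentially) the graph of a $3$-polytope, namely a pyramid over the tiled base, so one route is to make that identification rigorous and invoke Steinitz's theorem. However, I think the cleaner approach is a hands-on connectivity argument using the planar structure of the tiling, so that is what I would write up.

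First I would set up the planar picture. The tiles $T_1,\dots,T_n$ together with the four sides $S_1,\dots,S_4$ of $P$ give a subdivision of the square into convex regions; dualizing (one node per tile, one node per side, adjacency when the common boundary is a segment of positive length) produces a plane graph, and adding the four edges $\{S_i,S_{i+1}\}$ corresponds to adding the ``outer'' cycle coming from the four corners of $P$ — this is exactly the dual of the pyramid over $P$. So $G$ is a planar graph, and in fact it is the dual of a polyhedral (simple, $3$-connected) planar subdivision once we check there are no separating vertices or edges in the primal. The key structural facts I need are: (a) every tile $T_i$ has at least $4$ sides by Lemma~\ref{lem}(1), hence has degree at least $4$ in $G$; (b) each side $S_i$ meets at least one tile along a segment, and meets its two neighbouring sides, so every $S_i$ has degree at least $3$; and (c) the tiling is ``edge-to-edge enough'' in the sense that no single tile or single side disconnects the arrangement.

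Next, to prove $3$-connectivity I would remove an arbitrary pair of vertices $\{X,Y\}$ and argue $G-\{X,Y\}$ is connected. Break into cases by how many of $X,Y$ are side-vertices. If neither is a side vertex, the four $S_i$ still form a $4$-cycle in $G-\{X,Y\}$, and every remaining tile still touches the boundary of $P$ (a tile's closure meets $\partial P$, or is surrounded by other tiles through which one routes a path), so one connects every tile to this $4$-cycle; the only thing to rule out is that deleting two tiles separates a third, which cannot happen because a convex tile has $\ge 4$ neighbours and the region it occupies is simply connected, so at most two deletions cannot ``wall off'' a remaining tile — more carefully, the union of the closed remaining tiles and $\partial P$ stays connected because removing two closed convex pieces from the square leaves a connected set whose pieces are each adjacent to something still present. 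If exactly one of $X,Y$ is a side, say $X=S_1$, then $S_2,S_3,S_4$ still form a path, $Y$ is one tile, and I route all other tiles to $\{S_2,S_3,S_4\}$ by the same adjacency argument, using that a tile touching only $S_1$ among the sides still has $\ge 4$ neighbours total so at least $3$ non-side neighbours, one of which survives. If both $X,Y$ are sides, say $S_1,S_2$, then $S_3,S_4$ remain adjacent and every tile touches some side or some neighbour; here I use Lemma~\ref{lem}(2) (a tile meeting two consecutive sides contains their common corner) to control the few tiles near the deleted corners.

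The main obstacle, as always with $3$-connectedness proofs, is the ``local'' case where the two deleted vertices sit around a single corner or a single short edge of the arrangement and threaten to isolate a small tile wedged in between; handling this is exactly where I expect to lean on Lemma~\ref{lem} — parts (1) and (2) in particular — to show such a configuration forces the supposedly isolated tile to have a surviving neighbour. An alternative that sidesteps the case analysis entirely is to observe that the primal subdivision is a $3$-connected planar graph (every vertex of the subdivision is a corner of $P$ or a point where $\ge 3$ regions meet, every face is a tile or the outer face, and it has no cut vertex since the square's boundary is a cycle and each tile is convex), then quote the fact that the dual of a $3$-connected planar graph is $3$-connected; I would likely present the direct argument as the main proof and mention the Steinitz/duality viewpoint as a remark.
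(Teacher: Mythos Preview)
Your plan takes a different and considerably more laborious route than the paper, and in doing so leans on an assumption that is both unnecessary and, strictly speaking, not available.

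The paper's argument is short and purely geometric. It reduces to showing that after deleting any two vertices $v_1,v_2$, every surviving tile $T_i$ can be joined to some surviving side $S_j$. The device is this: pick a point $p$ in the interior of $T_i$ and produce a \emph{line segment} from $p$ to the relative interior of $S_j$ that avoids $v_1$ and $v_2$; the tiles crossed by that segment then induce a connected subgraph of $G'$ containing $T_i$ and $S_j$. When $v_1,v_2$ are both sides of $P$ this is immediate. When at least one of $v_1,v_2$ is a tile, the paper separates the two convex sets $v_1,v_2$ by a line $\ell$ and observes that the line $\ell'$ through $p$ parallel to $\ell$ meets $v_1\cup v_2$ in at most a single segment; hence one of the two subsegments of $\ell'\cap P$ starting at $p$ reaches $\partial P$ without touching $v_1\cup v_2$. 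That is the whole proof: no case split on how many deleted vertices are sides, no degree counting, no appeal to Lemma~\ref{lem}.

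This matters because the Proposition is stated for an \emph{arbitrary} tiling of $P$ by convex polygons; congruence is not assumed in Section~\ref{sec:G}, and the paper's proof never uses it. Your plan, by contrast, invokes Lemma~\ref{lem}(1) to force every tile to have at least four sides and hence degree $\ge 4$ in $G$. But Lemma~\ref{lem} requires the $T_i$ to be congruent and $n$ odd, and part~(1) rests on the Monsky--Thomas theorem. So your degree bound is simply unavailable at this generality; a triangular tile of degree $3$ is perfectly possible, and your heuristic ``at most two deletions cannot wall off a tile of degree $\ge 4$'' collapses. Even granting the degree bound, the step from ``the union of the remaining closed tiles is topologically connected'' to ``the remaining tiles induce a connected subgraph of $G$'' is a real gap, since adjacency in $G$ requires a shared segment of positive length, not merely a shared point. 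The separating-line trick is exactly what closes both gaps at once; once you have it, the case analysis and the appeal to Lemma~\ref{lem} become superfluous. Your duality alternative would also work in principle, but establishing that the primal subdivision is $3$-connected is essentially the same problem over again.
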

\begin{proof}
	Let $v_1$,$v_2\in V$, each $v_i$ is either a tile or a side of $P$. We must prove that for every selection of these vertices, $G'=G-v_1-v_2$ is connected. It is enough to prove that, for each $T_i\in V(G')$ and each $S_j\in V(G')$, there is a path between $T_i$ and $S_j$. In this way, any two vertices of $G'$ may be connected by a path through some $S_j$ or some $T_i$. Let $p$ be a point in the interior of a $T_i\in V(G')$.
	
	If $v_1$ and $v_2$ are both sides of $P$, then choose a segment $l$ from $p$ to any point in the relative interior of $S_j\neq v_1,v_2$. The segment $l$ is contained in $P\setminus (v_1\cup v_2)$. By taking the graph induced by tiles that intersect $l$ (including $S_j$) we obtain a connected subgraph of $G'$ which contains $T_i$ and $S_j$. Therefore $G'$ is connected.
	
	Otherwise, consider a line $l$ separating the relative interiors of $v_1$ and $v_2$. Let $l'$ be the line parallel to $l$ that contains $p$. Since $v_1$ and $v_2$ are convex, the intersection between $l'$ and $(v_1\cup v_2)$ is a segment, a point or the empty set. In any case, there is a segment contained in $l'$ connecting $p$ and a side $S_j\neq v_1,v_2$ which does not intersect $v_1\cup v_2$. As above, this implies that $G'$ is connected.
\end{proof}

Steinitz's theorem \cite{Gru07} states that the family of polyhedral graphs (corresponding to convex polytopes) is precisely the family of $3$-connected planar graphs. Therefore, our graph $G$ is a polyhedral graph.

\begin{figure}
	\includegraphics[scale=0.7]{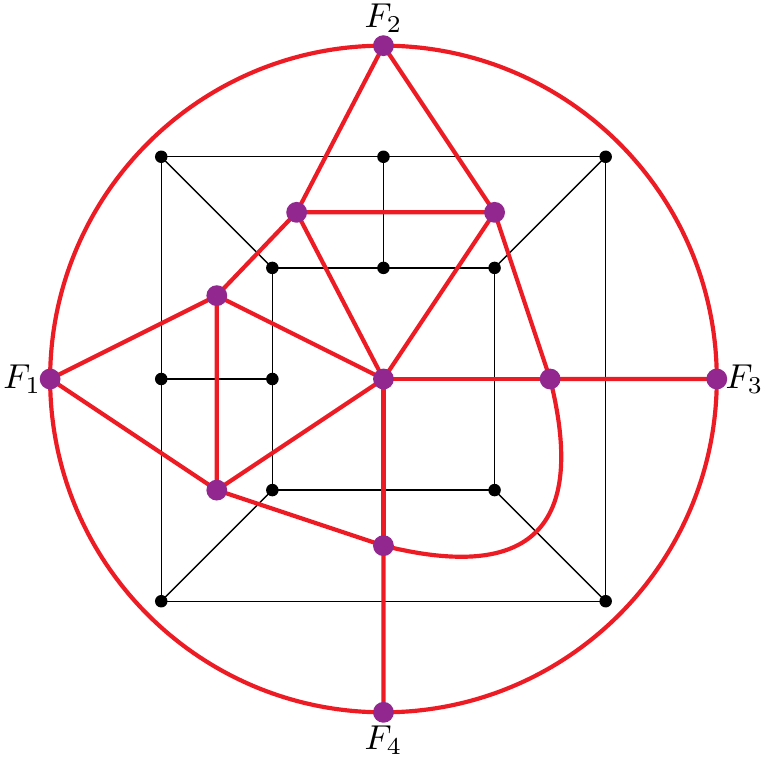}
	\caption{A tiling (black) and its associated graph (red).}
	\label{fig:graphG}
\end{figure}

Note that the degree of a tile $T$ in $G$ may not correspond to the number of sides of $T$ as a polygon. However, the number of sides of $T$ is a lower bound for the degree of $T$ in $G$.

In this way we have associated a graph to each convex tiling of the square by tiles of at least $4$ sides. This graph has a distinguished set of vertices $S$ which form a cycle. Figure \ref{fig:graphG} shows an example of a tiling with its associated graph $G$.

In what follows, we study pairs $(G,S)$ where $G=(V,E)$ is a $3$-connected planar graph with $n+4$ vertices, and $S\subset V$ induces a $4$-cycle in $G$. Since these graphs come from tilings of the square, we simply refer to the vertices of $G$ in $S$ as \emph{sides} and the vertices in $V\setminus S$ as \emph{tiles}. We say that two pairs $(G,S)$ and $(G',S')$ are isomorphic if there is an isomorphism $\phi$ between $G$ and $G'$ such that $\phi(S)=S'$.

To avoid confusion in the future, the union of the sets of vertices of the polygons $T_i$ are called \emph{tiling-vertices}. The set of tiling-vertices in the boundary of a tile $T_i$ are called the \emph{tiling-vertices of $T_i$}; the tile $T_i$ has internal angles at each of these, although some may be equal to $\pi$.

\begin{remark}
	In practice, we need to generate all such graphs using \verb|plantri| \cite{BMo2007}. To facilitate this, we add an extra vertex $S_0$ to $G$ which is adjacent to the vertices in $S$. Since we deal mostly with quadrilateral tiles, we can specify that this new graph has minimum degree $4$. It also simplifies checking isomorphisms, since we only need this new vertex to be fixed.
\end{remark}

\section{Exploring the tilings}\label{sec:algorithm}

Let $P$ be either a rectangle with sides of length $1$ and $r$, or the unit square (in which case we take $r=1$ as a constant). Let $S$ be the set of sides of $P$. Assume that $P$ can be tiled by copies $T_1,\dots, T_n$ of a convex body $T$ which is not a rectangle. By part (1) of Lemma \ref{lem}, $T$ has at least $4$ sides. Therefore, as described in Section \ref{sec:G}, there is a pair $(G,S)$ associated to this tiling where $G$ is a $3$-connected planar graph on $n+4$ vertices such that its tiles have minimum degree $4$.

Using \verb|plantri| it is possible to generate all graphs $G$ with these properties with $n\le 9$.
Once this is done, we search each graph $G$ to find the possible distinguished $4$-cycles. A single graph may have several possible distinguished $4$-cycles, so we use the isomorphism algorithm in \verb|NetworkX| \cite{HSS2008} to avoid including two pairs $(G,S)$ which are isomorphic.

Now we filter these lists by using part (2) of Lemma \ref{lem}. Since this is a purely combinatorial statement, it is easy to check it directly on $(G,S)$.

At this point, the number of sides of $T$ becomes relevant. We split the analysis into cases. If $T$ is a quadrilateral then we use part (3) of Lemma \ref{lem} to filter the list once more.
If not, then the tiles in $(G,S)$ necessarily have degree at least $5$, so we may discard the graphs that do not satisfy this.

In order to filter this list further, we must use geometrical properties of $T$ such as the values of its angles and side-lengths.
Let $p_1,p_2,\dots,p_k$ be the vertices of $T$, $\alpha_i$ the internal angle of $T$ at $p_i$ and $t_i$ be the length of the segment $p_ip_{i+1}$. The angles must satisfy the equation $\alpha_1+\alpha_2+\dots+\alpha_k=(k-2)\pi$.

Recall that if $v$ is a tiling-vertex of a tile $T_i$, then the internal angle of $T_i$ at $v$ is either $\pi$ or one of the $\alpha_j$. For every tiling-vertex $v$, the sum of the internal angles at $v$ of the tiles that contain $v$ must be either $2\pi$, $\pi$ or $\pi/2$ depending on whether $v$ is in the interior, on a side or on a vertex of $P$. This is shown in Figure \ref{fig:eq}. Therefore, the angles $\alpha_1,\alpha_2,\dots,\alpha_k$ must satisfy a system of non-homogeneous linear equations.

\begin{figure}
	\includegraphics[scale=0.7]{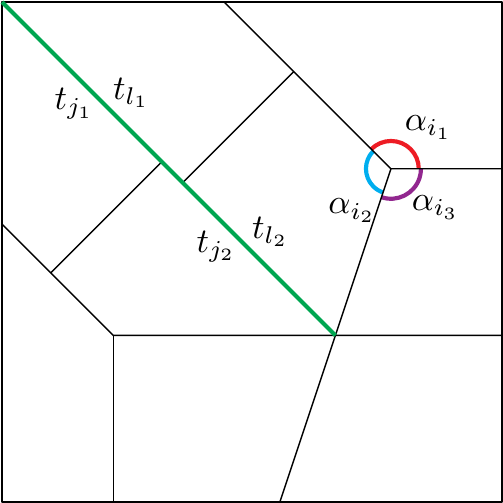}
	\caption{In this tiling we see that $\alpha_{i_1}+\alpha_{i_2}+\alpha_{i_3}=2\pi$, and $t_{j_1}+t_{j_2}=t_{l_1}+t_{l_2}$.}
	\label{fig:eq}
\end{figure}

Something similar happens for the side-lengths of $T$. For every side $s$ of $P$, take the tiles $T_i$ adjacent to $s$. The sum of the sides of these tiles contained in $s$ must add up to the length of $s$ which is either $1$ or $r$. We do not fix the value of $r$ but treat it as a variable. So this gives us four non-homogeneous linear equations for the $t_i$ and, if applicable, $r$.

Furthermore, if two tiles $T_i$ and $T_j$ intersect in a segment $v_1v_2$ such that $v_1$ and $v_2$ are consecutive corners of both $T_i$ and $T_j$, then we can deduce an equation of the form $t_k=t_l$.
More generally, if there is a line $\ell$ which contains a side of the tiles $T_{i_1},\dots,T_{i_{l}}$ and $T_{j_1},\dots,T_{j_{l'}}$ such that the $T_{i_m}$ and $T_{j_{m'}}$ are on different sides of $\ell$ and the $T_{i_m}$ cover the same segment of $\ell$ as the $T_{j_{m'}}$, then we may deduce another linear equation for the $t_i$ (see Figure \ref{fig:eq}).

In the quadrilateral case we may also check non-linear equations which involve both sides and angles of $T$. The area of $T$ is $r/n$, so \[t_1t_2\sin(\alpha_2)+t_3t_4\sin(\alpha_4)=t_2t_3\sin(\alpha_3)+t_4t_1\sin(\alpha_1)=2r/n\]
and by computing the length of the diagonals of $T$ we obtain
\begin{align*}
t_1^2+t_2^2-2t_1t_2\cos(\alpha_2)&=t_3^2+t_4^2-2t_3t_4\cos(\alpha_4)\\
t_2^2+t_3^2-2t_2t_3\cos(\alpha_3)&=t_4^2+t_1^2-2t_4t_1\cos(\alpha_1).
\end{align*}

All of these equations must be satisfied.
The problem is that, from the graph $G$, we do not know which tiling-vertices of a tile $T_i$ correspond to which vertices of $T$. If the degree of $T_i$ in $G$ is larger than $4$ then we must also decide which tiling-vertices of $T_i$ have internal angles equal to $\pi$. We could try computing all possibilities but the number of cases is too large even for a single graph, so instead we do something slightly more efficient which allows us to discard most of the remaining graphs.

Each angle of $T$ is labeled by $\aaa,\rrr$ or $\ooo$ depending on whether the angle is acute, right or obtuse. We also label the angles of each tile $T_i$ at each tiling-vertex of $T_i$ with $\aaa,\rrr,\ooo,\ppp$ depending on whether the angle is acute, right, obtuse or plain (meaning that the internal angle at this point is $\pi$). We call the label of an angle its \emph{angle-type}. We use the fact that the angles of each $T_i$ that are not labeled with $\ppp$ must appear in the same cyclic order as the angles of $T$.

There are several ways in which the internal angles of $T$ can be labeled. This depends on whether $P$ is a square or a rectangle, but they are easy to list. For example, if $T$ is a quadrilateral, the internal angles of $T$, ordered cyclically and without taking orientation into account, can be labeled in exactly $9$ different ways: $\aaa\aaa\aaa\ooo$, $\aaa\aaa\rrr\ooo$, $\aaa\aaa\ooo\ooo$, $\aaa\rrr\aaa\ooo$, $\aaa\rrr\ooo\rrr$, $\aaa\rrr\ooo\ooo$, $\aaa\ooo\aaa\ooo$, $\aaa\ooo\rrr\ooo$ and $\aaa\ooo\ooo\ooo$. In this case we are using property (4) of Lemma \ref{lem} to discard labels with two consecutive right angles. Since $T$ is not a rectangle, we do not include $\rrr\rrr\rrr\rrr$. If $T$ has $5$ sides, the number of ways to label the angles is greater.

There are several conditions which the angle-types of the tiles must satisfy. For example, assume that $v_k$ is a tiling-vertex of $T_i$. If $v_k$ is also a corner of $P$, then the angle-type of $T_i$ at $v_k$ is either $\aaa$ or $\rrr$, depending on whether there are other tiles containing $v_k$ or not.
To state another example, assume that $T_i$ and $T_j$ are the only two tiles containing a tiling-vertex $v_k$ which lies on a side of $P$. Then either the angle-types of $T_i$ and $T_j$ at $v_k$ are both $\rrr$ or one is $\aaa$ and the other is $\ooo$. See Figure \ref{fig:angtype} for an example of a tiling with some angle-types.

\begin{figure}
	\includegraphics[scale=0.7]{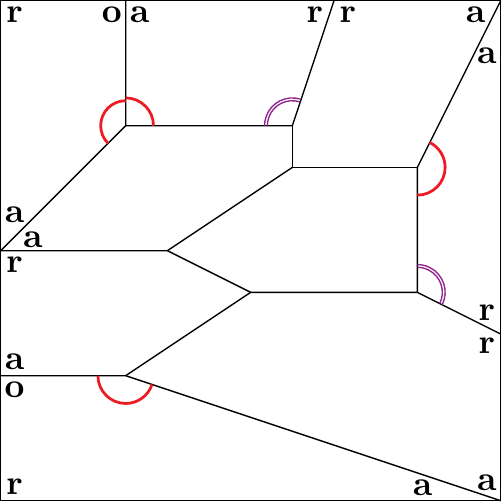}
	\caption{This is a representation of a tiling where the angles of $T$ are labeled as $\aaa\rrr\ooo\rrr$. At this point we are only working with the graph $G$, so the angles in this representation may not correspond to the actual angles of the tiles and some angles may even be plain.
	In this case there is only one possibility for the angle-types of the angles that lie in the boundary of $P$. From there it follows that the red angles must be labeled with $\rrr$ and the purple angles with $\ooo$.}
	\label{fig:angtype}
\end{figure}

In order to avoid listing these properties individually, we introduce the following definitions.
Let $\varepsilon>0$ be a small real number. Define $\mina(T_i,v_k)$ as $\varepsilon$, $\pi/2$, $\pi/2+\varepsilon$ or $\pi$ depending on whether the angle-type of $T_i$ at $v_k$ is $\aaa$, $\rrr$, $\ooo$, or $\ppp$, respectively. Likewise, define $\maxa(T_i,v_k)$ as $\pi/2-\varepsilon$, $\pi/2$, $\pi-\varepsilon$ or $\pi$ depending on whether the angle-type of $T_i$ at $v_k$ is $\aaa$, $\rrr$, $\ooo$, or $\ppp$. If $\varepsilon$ is small enough, then values of the angles of $T_i$ at $v_k$ are in the interval $[\mina(T_i,v_k),\maxa(T_i,v_k)]$.
Thus, for every tiling-vertex $v_k$ we have that
\begin{align*}
\sum_{T_i\ni v_k}&\maxa(T_i,v_k)\ge \alpha(v_k),\\
\sum_{T_i\ni v_k}&\mina(T_i,v_k)\le \alpha(v_k),
\end{align*}
where $\alpha(v_k)$ is $\pi/2$, $\pi$ or $2\pi$ if $v_k$ is in a corner, side or interior of $P$, respectively.
In practice, this allows us to decide the angle-types of the tiles fairly quickly, even if we do not know the values of the angles of $T$. Another useful observation comes from noting that each of the previous sums has at most $n$ terms; since we only work with $n\le 9$, any value of $\varepsilon$ smaller than $\pi/18$ allows us to discriminate between valid and invalid assignations of angle-types.

Now we are ready to filter the list of pairs $(G,S)$. For each of these pairs, select one of the possible angle-type assignations for the internal angles of $T$. For each tile $T_i$, we wish to decide which tiling-vertices of $T_i$ correspond to which vertices of $T$. So we assign to $(T_i,v_k)$, where $v_k$ is a tiling-vertex of $T_i$, a vertex of $T$ or a mark indicating that the angle-type of $T_i$ at $v_k$ is $\ppp$.
In order to do this, we use a deep search algorithm to explore the tree of possibilities. The process is divided into three main steps:
\begin{description}
	\item[Selection] Select a tile $T_i$ and one of its tiling-vertices $v_k$ such that $(T_i,v_k)$ is unassigned.
	\item[Assignation] Makes a list of valid assignations for $(T_i,v_k)$ taking into account the conditions that the angle-types must satisfy, described above.
	\item[Equation verification] For each assignation, check if a new equation for the angles or the sides of $T$ is generated. If so, verify that the equations still have solutions. This is done with \verb|Sympy| \cite{MSP+2017} which works symbolically instead of numerically, this guarantees that we do not discard equations when solutions actually do exist.
	If possible, check that the non-linear equations are not violated. This final step is done numerically, so a small tolerance is allowed.
\end{description}
The surviving assignations are added to the stack for further exploration.

The order in which the tiles and tiling-vertices are explored is important. We always start with the corners of the rectangle and the tiles that contain them. We continue with tiles adjacent to a side of $P$ and tiling-vertices in this side in a cyclic order. In many cases, exploring these pairs of tiles and tiling-vertices is enough to discard $(G,S)$ for a given angle-type labeling of $T$.

This does not eliminate all possible pairs $(G,S)$, so we also check the additional geometric conditions. These are described in the following statements.

\begin{lem}\label{lem:geom}
	Let $T$ be a quadrilateral with area $A$ and let $\alpha_i$ and $t_i$ be as above. If $T$ is labeled with angle-types $\aaa\rrr\ooo\rrr$ and $\alpha_1=\pi/4$, as in the left part of Figure \ref{fig:lgeom}, then
	\[\sqrt{2A}< t_1,t_4 \le 2\sqrt{A}.\]
	If instead $T$ is labeled with angle-types $\aaa\rrr\rrr\ooo$, as in the right part of Figure \ref{fig:lgeom}, then
	\begin{align*}
		t_1 &= t_3+t_4\cos(\alpha_1),\\
		t_2 &= t_4\sin(\alpha_1).
	\end{align*}
\end{lem}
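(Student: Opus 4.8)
The plan is to handle the two labelings separately, since in each case the prescribed angle-types constrain the shape of $T$ tightly enough that a direct coordinate computation settles the matter.

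\textbf{The labeling $\aaa\rrr\rrr\ooo$.} The two consecutive right angles at $p_2$ and $p_3$ force $p_2-p_1$ and $p_4-p_3$ to be perpendicular to $p_3-p_2$, hence parallel to each other; so $T$ is a right trapezoid with leg $t_2=|p_2p_3|$ perpendicular to the parallel sides $t_1=|p_1p_2|$ and $t_3=|p_3p_4|$. I would place $p_2$ at the origin with $p_2p_3$ along a coordinate axis, write down $p_1,p_3,p_4$ explicitly in terms of $t_1,t_2,t_3$, form the vectors $p_2-p_1$ and $p_4-p_1$, and read off $\cos\alpha_1$ and $\sin\alpha_1$ from their dot and cross products. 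This yields $t_4\cos\alpha_1=t_1-t_3$ and $t_4\sin\alpha_1=t_2$, which are the two claimed identities; convexity together with $\alpha_1$ acute guarantees $t_1>t_3$, so the sign in the first identity is correct.

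\textbf{The labeling $\aaa\rrr\ooo\rrr$ with $\alpha_1=\pi/4$.} The angle sum first gives $\alpha_3=3\pi/4$. The key observation is that, since $\angle p_1p_2p_3=\angle p_1p_4p_3=\pi/2$, both $p_2$ and $p_4$ lie on the circle with diameter $p_1p_3$ (Thales). Setting $d=|p_1p_3|$ and $\beta=\angle p_2p_1p_3$, $\gamma=\angle p_4p_1p_3$, we get $t_1=d\cos\beta$, $t_2=d\sin\beta$, $t_3=d\sin\gamma$, $t_4=d\cos\gamma$, with $\beta+\gamma=\alpha_1=\pi/4$ by convexity. Cutting $T$ along the diagonal $p_1p_3$ gives
\[
A=\tfrac12 t_1t_2+\tfrac12 t_3t_4=\tfrac{d^2}{4}\bigl(\sin 2\beta+\sin 2\gamma\bigr)=\tfrac{d^2}{2\sqrt2}\cos(\beta-\gamma).
\]
Substituting $u=2\beta$, which ranges over the open interval $(0,\pi/2)$ for a non-degenerate quadrilateral, and simplifying, one finds
\[
\frac{t_1^2}{A}=\frac{2(1+\cos u)}{\cos u+\sin u},\qquad \frac{t_4^2}{A}=\frac{2(1+\sin u)}{\cos u+\sin u}.
\]
Then $t_1^2>2A$ is equivalent to $\sin u<1$ and $t_4^2>2A$ to $\cos u<1$, both true on the open interval; while $t_1^2\le 4A$ and $t_4^2\le 4A$ reduce to $\cos u+2\sin u\ge 1$ and $2\cos u+\sin u\ge 1$ respectively, each an elementary one-variable inequality on $[0,\pi/2]$ (write the left-hand side as $\sqrt5\,\sin(u+\varphi)$ and check the endpoints). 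Taking square roots gives $\sqrt{2A}<t_1,t_4\le 2\sqrt A$.

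\textbf{Expected obstacle.} The only genuinely non-routine point is in the second case: spotting the common Thales circle through $p_2$ and $p_4$ is what collapses the two-parameter family of such quadrilaterals to a single parameter and makes the area formula clean. After that one must be careful to work on the open interval $u\in(0,\pi/2)$ so that the \emph{strict} inequality $\sqrt{2A}<t_i$ is justified; the upper bound $\le 2\sqrt A$ is actually not tight (one in fact has strict inequality, attained only in a degenerate limit), but the stated weaker form is all that is needed.
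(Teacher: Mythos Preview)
Your argument is correct in both parts. For the right-trapezoid case your coordinate computation is essentially what the paper dismisses as ``straightforward,'' so there is nothing to add there.

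For the $\aaa\rrr\ooo\rrr$ case with $\alpha_1=\pi/4$, your route genuinely differs from the paper's. The paper does not use the Thales circle at all: it simply records the linear relations $t_4=(t_1+t_2)/\sqrt2$ and $t_3=(t_1-t_2)/\sqrt2$ (which follow by projecting onto the directions of $t_1$ and $t_2$), substitutes into the area to obtain the single algebraic constraint $t_1t_2+\tfrac12(t_1^2-t_2^2)=2A$, and then treats $t_1$ as an implicit function of $t_2$ on $0<t_2<t_1$; the derivative $t_1'=(t_2-t_1)/(t_2+t_1)$ is negative, so the extremes of $t_1$ are read off from the endpoints $t_2\to 0$ and $t_2\to t_1$. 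The bound for $t_4$ follows symmetrically. Your Thales parametrisation $u=2\beta$ is equivalent (indeed your identities $t_1=d\cos\beta$, $t_2=d\sin\beta$ recover the paper's linear relations after one line of trig), but it trades the one-line monotonicity argument for a pair of explicit rational-trigonometric formulas and two auxiliary inequalities on $[0,\pi/2]$. Both methods reveal that the upper bound $2\sqrt A$ is only attained in a degenerate limit, so the stated $\le$ is slack; you observed this correctly. In short: your proof is sound, but the ``non-routine'' Thales observation you flag is not actually needed---the paper's implicit-differentiation approach is shorter and entirely elementary.
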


\begin{figure}
	\includegraphics[scale=0.7]{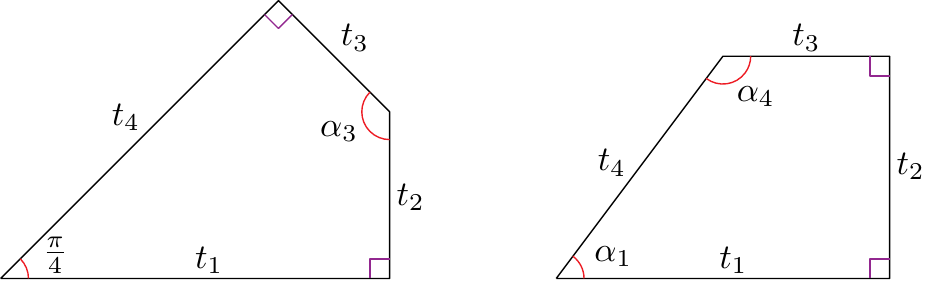}
	\caption{The quadrilaterals in Lemma \ref{lem:geom}}\label{fig:lgeom}
\end{figure}

\begin{proof}
	It is easy to see that $t_4=(t_1+t_2)/\sqrt{2}$ and $t_3=(t_1-t_2)/\sqrt{2}$, therefore
	\[t_1t_2+\frac{t_1^2-t_2^2}{2}=2A.\]
	Thinking of $t_1$ as a function of $t_2$ and taking the derivative with respect to $t_2$ yields
	\[t_1'=\frac{t_2-t_1}{t_2+t_1}.\]
	Since $\alpha_1=\pi/4$, we have $0<t_2<t_1$, so $t_1'$ is negative and the extremal values for $t_1$ occur when $t_2=0$ and $t_2=t_1$ which gives the desired result.
	
	The second part is straightforward.
\end{proof}

\begin{lem}\label{lem:aror}
	Let $T$ be a quadrilateral which tiles $P$ and has two angles labeled as $\rrr$. Assume some tile $T_i$ has a vertex on corner of $P$ with angle-type $\aaa$, then the acute angle of $T$ must be $\pi/3$ or $\pi/4$.
\end{lem}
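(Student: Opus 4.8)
The plan is to pin down the interior angles of $T$ from the hypothesis and then read off the acute angle from the angle-sum equation at the distinguished corner. Since $T$ is a convex quadrilateral two of whose angles are labeled $\rrr$, its four interior angles are $\pi/2$, $\pi/2$, $\alpha$ and $\pi-\alpha$: the two non-right angles add up to $2\pi-\pi=\pi$, and as $T$ is not a rectangle neither of them equals $\pi/2$, so exactly one is acute, namely $\alpha<\pi/2$, and the other is the obtuse angle $\pi-\alpha$. Consequently any angle that a tile contributes at a tiling-vertex belongs to $\{\alpha,\pi/2,\pi-\alpha\}$, with the single exception of a plain angle $\pi$ when the vertex lies in the relative interior of one of the tile's edges.

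First I would examine the corner $c$ at which $T_i$ has angle-type $\aaa$. The angles of the tiles meeting at $c$ sum to $\pi/2$, and one of them is the acute angle $\alpha$ of $T_i$. Because this angle-type is $\aaa$ and not $\rrr$, the tile $T_i$ does not cover $c$ on its own, so at least one further tile meets $c$. The key observation is that every angle occurring at $c$ must then equal $\alpha$: a contribution of $\pi/2$ would already exhaust the corner and leave no room for the mandatory second tile, while a contribution of $\pi-\alpha$, or a plain $\pi$, exceeds $\pi/2$ on its own. Thus no right, obtuse, or plain angle can appear at $c$, and if $m\ge 2$ tiles meet there we obtain the single clean identity $m\alpha=\pi/2$, that is $\alpha=\pi/(2m)$. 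In particular the hypothesis already forces $\alpha\le\pi/4$, with equality precisely when exactly two tiles meet at $c$.

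What remains — and where I expect the real work to lie — is to combine this local identity with the global structure so that the acute angle is squeezed into the short list in the statement. The corner identity $\alpha=\pi/(2m)$ yields the value $\alpha=\pi/4$ in the tight case $m=2$; the companion value $\pi/3$ of the dichotomy comes from the only other tight packing of the acute angle, three copies meeting along a straight side of $P$, where $3\alpha=\pi$. To discard the smaller values $\pi/6,\pi/8,\dots$ produced by $m\ge 3$ I would bring in the two global constraints available to us: the budget of only $n\le 9$ congruent tiles, and the fact that each of the $n$ tiles also carries the obtuse angle $\pi-\alpha>\pi/2$, which can be placed only at a side or interior tiling-vertex and, at a side, must be paired with a single acute angle since the leftover $\alpha$ admits no finer decomposition. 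Bounding how many tiles may crowd a single vertex before the budget is exhausted or an obtuse angle is left with nowhere to go is the delicate step; it is this bookkeeping, rather than the angle-sum identity $\alpha=\pi/(2m)$ itself, that isolates the admissible acute angles claimed in the lemma.
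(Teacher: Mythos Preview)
Your corner analysis is correct, and in fact it proves more than you seem to realise: from $m\alpha=\pi/2$ with $m\ge 2$ you get $\alpha\in\{\pi/4,\pi/6,\pi/8,\dots\}$, so $\alpha=\pi/3$ is already \emph{impossible} under the hypothesis. Your attempt to ``recover'' $\pi/3$ via $3\alpha=\pi$ at a side vertex is therefore misguided; that equation contradicts the corner equation you just established. The conclusion of the lemma is simply not sharp, and there is nothing to recover.

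The real gap is the step you flag yourself: ruling out $m\ge 3$. Your sketch for this invokes ``the budget of only $n\le 9$ congruent tiles'', but the lemma carries no such hypothesis, and the paper's proof does not use it. Without a concrete argument here the proof is incomplete; the vague bookkeeping you describe does not obviously terminate, since for $\alpha=\pi/6$, say, an interior vertex can be filled in many ways (two obtuse and two acute, one right and four acute, etc.), and nothing in your outline singles out a contradiction.

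The paper's route is quite different and avoids the corner equation entirely. Since each tile carries exactly one acute and one obtuse angle, these are equinumerous globally. No obtuse angle can sit at a corner, and at a side vertex an obtuse angle leaves room for exactly one acute angle (the leftover is $\pi-\alpha_o=\alpha_a$). Hence the acute angle you placed at the corner is ``unmatched'', and some interior vertex $v$ must carry strictly more obtuse than acute angles. Writing $N_a\alpha_a+N_o\alpha_o+N_r\pi/2=2\pi$ at $v$ and using $\alpha_a+\alpha_o=\pi$ gives $(N_o-N_a)\alpha_o=(4-2N_a-N_r)\pi/2$; since $\alpha_o>\pi/2$ forces $N_o-N_a\le 3$ and the numerator is at most $4$, the only obtuse solutions are $\alpha_o\in\{2\pi/3,3\pi/4\}$, i.e.\ $\alpha_a\in\{\pi/3,\pi/4\}$. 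This is a clean two-line finish once the surplus-obtuse vertex is located, and it is what your argument is missing.
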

\begin{proof}
	Let $\alpha_a$ and $\alpha_o$ be the values of the acute and obtuse angles of $T$. In the tiling, the number of obtuse angles must be equal to the number of acute angles. Note that no obtuse angle coincides with a corner of $P$ and any obtuse angle which coincides with a side of $P$ shares a vertex with exactly one acute angle of a different tile.
	Since there is an acute angle which coincides with a corner, there exists a tiling-vertex $v$ in the interior of $P$ incident to more obtuse angles than acute angles. Assume that the number of acute, obtuse and right angles incident to $v$ are $N_a$, $N_o$ and $N_r$, respectively. The sum of angles in each interior tiling-vertex is $2\pi$, so $N_o\alpha_o+N_a\alpha_a+N_r\pi/2=(N_o-N_a)\alpha_o + N_a\pi+N_r\pi/2=2\pi$. Since $\alpha_o>\pi/2$, we have that $N_o-N_a\le 3$, so $\alpha_o$ is of the form $p\pi/(2q)$ with $1\le p\le 4$ and $1\le q\le 3$. From here we may conclude that the only possible values for $\alpha_o$ are $2\pi/3$ and $3\pi/4$ and therefore $\alpha_a$ is $\pi/3$ or $\pi/4$.
\end{proof}

When $P$ is a square and $n\le 9$ is an odd integer or when $P$ is a rectangle and $n\le 7$ is an odd integer, every pair $(G,S)$ is discarded with this method. This proves Theorems \ref{thm} and \ref{thm:rect}.

\section{Equiangular pieces}\label{sec:equiangSec}

Using the same method, we explore another variant of the conjecture. At the end of \cite{YZZ2016} there are four open problems:
\begin{enumerate}
	\item Does every dissection of the square into five similar convex tiles use right isosceles triangles or rectangles as tiles?
	\item Does every dissection of the square into five equiangular convex polygons use only angles measuring $\pi/4$, $\pi/2$, $3\pi/4$?
	\item Find all dissections of the square into five equiangular non-rectangular convex polygons.
	\item Is every dissection of the square into $n$ congruent convex tiles necessarily the ``standard'' one (i.e. dividing it by $n-1$ vertical or horizontal lines) if $n\ge3$ is a prime number?
\end{enumerate}

Our only contribution to the fourth problem is that of Theorem \ref{thm}. However, we are able to solve the other three.
The first two questions have a negative answer, this can be seen immediately after solving the third problem.

By slightly modifying the algorithm described in Section \ref{sec:algorithm}, we are able to list all the possible equiangular tilings (in the sense described in Section \ref{sec:G}). These are shown in Figures \ref{fig:5_3equiang} and \ref{fig:5_4equiang}.

The equivalence relation we use might not be the one one might expect.
For example, in Figure \ref{fig:5_4equiang} there are four pairs of tilings enclosed in dashed rectangles. The two tilings in each rectangle have the same pair $(G,S)$, but one might argue that they are geometrically distinct.

The $31$ ways referred to in Theorem \ref{thm:equiang} can be separated into two types of tiling; then ones in which the tiles are triangles and the ones in which they are quadrilaterals.

\begin{figure}
	\includegraphics[scale=0.7]{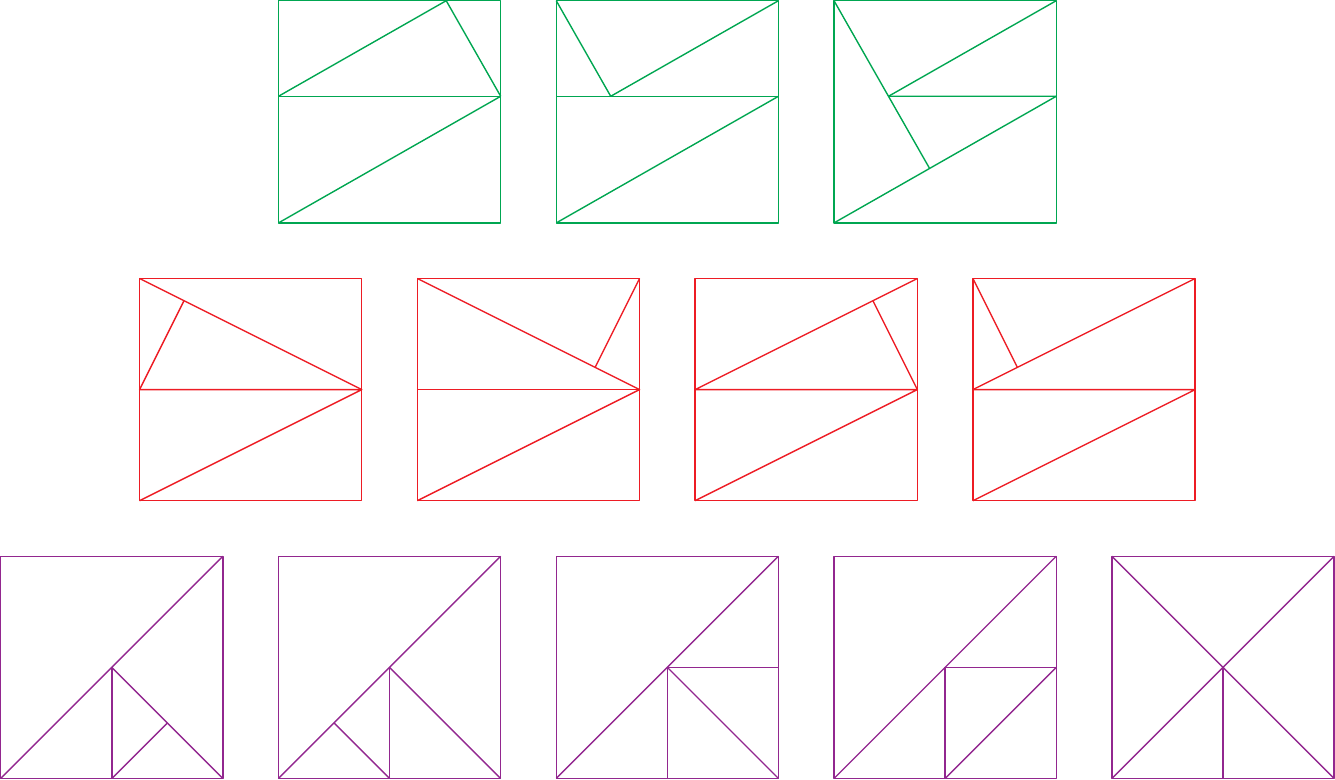}  
	\caption{Equiangular tilings with five triangles.}\label{fig:5_3equiang}
\end{figure}

Figure \ref{fig:5_3equiang} shows the $12$ ways in which a square can be tiled using $5$ similar triangles. The tiles in each case are right triangles, let $\alpha$ be the smallest angle of the triangles in each case. The top three tilings in Figure \ref{fig:5_3equiang} (green) have $\tan(\alpha) = a \approx 0.56984$, where $a$ is the real root of the polynomial $a^3-a^2+2a-1$.
The tilings in the middle row (red) have $\tan(\alpha)=1/2$. The bottom tilings (purple) have $\tan(\alpha)=1$.

\begin{figure}
	\includegraphics[scale=0.7]{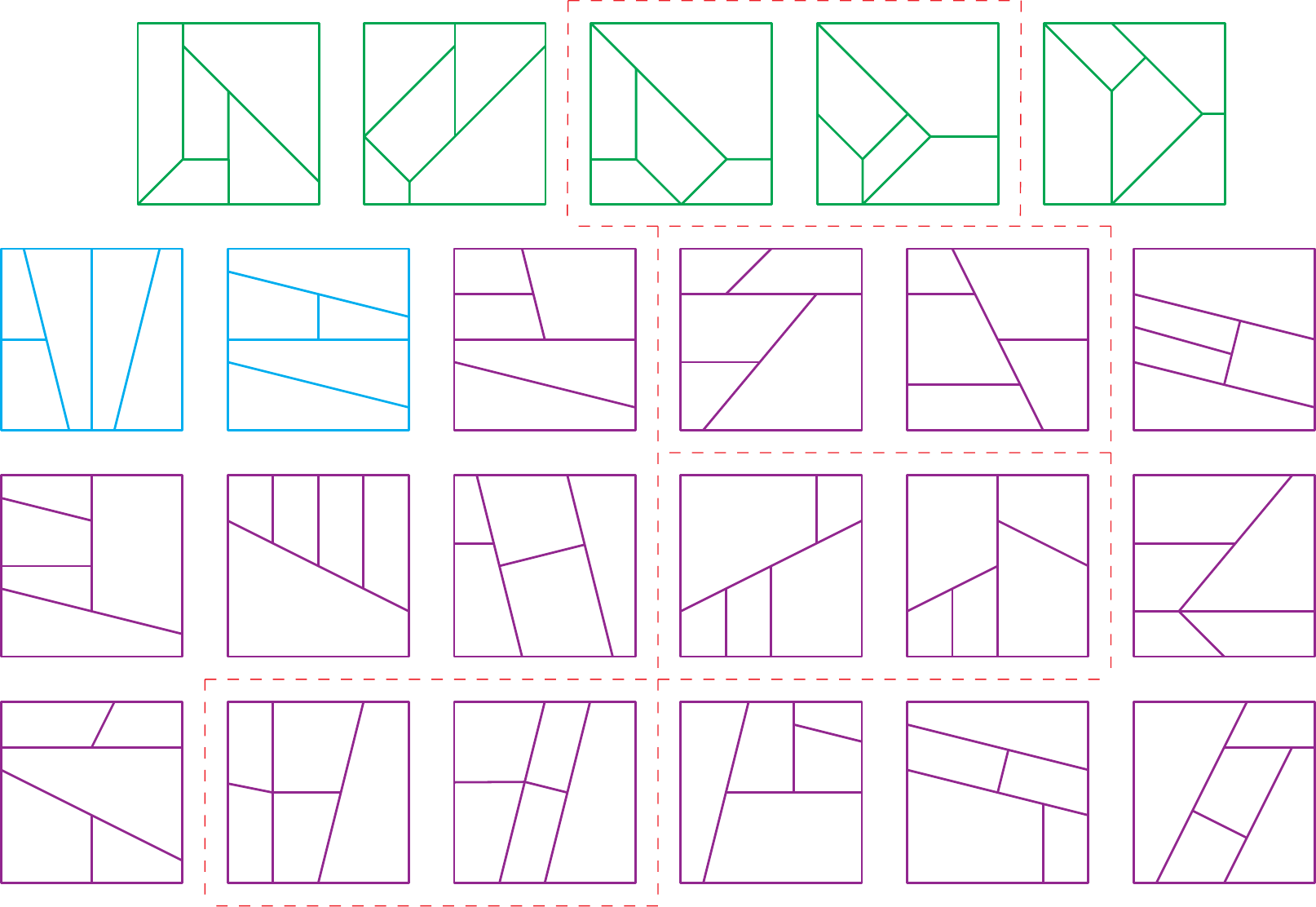}  
	\caption{Equiangular tilings using five quadrilaterals. The angles of the tiles are of the form $\alpha,\pi/2,\pi/2,\pi-\alpha$.}\label{fig:5_4equiang}
\end{figure}

The quadrilateral case is shown in Figure \ref{fig:5_4equiang}. Here each pair $(G,S)$ can be realized as a continuum of tilings as there are always several degrees of freedom. As mentioned before, the tiles which are enclosed in dashed rectangles are equivalent and are included only to illustrate what the equivalence means. In total, there are $19$ families of tilings of the square into $5$ equiangular quadrilaterals such that the elements of each class produce the same pair $(G,S)$. The families represented in the first row of Figure \ref{fig:5_4equiang} (green) have angles $\pi/4,\pi/2,\pi/2,3\pi/4$. The rest have angles $\alpha,\pi/2,\pi/2,\pi-\alpha$, where $\alpha$ can be chosen, in all but two tilings, in the interval $(0,\pi/4)$ and can sometimes take more values. The first two tilings (blue) are the exception; $\alpha$ can only be taken in the interval $(0,\arctan(1/2))$.

In order to prove Theorem \ref{thm:equiang}, we slightly modify the algorithm described in Section \ref{sec:algorithm}.

The obvious change is to ignore the side-lengths, so the equations only involve angles.
Another helpful change, which greatly improves the running time of our algorithm, is to use more angle-types when tiling with triangles. In this case, instead of using $\aaa$ to label an acute angle, we use the labels $\saa$, $\maa$ and $\laa$ which correspond to acute angles smaller than $\pi/4$, equal to $\pi/4$ and greater than $\pi/4$, respectively. Analogously, we replace $\ooo$ with $\soo$, $\moo$ and $\loo$ which correspond to obtuse angles smaller than $3\pi/4$, equal to $3\pi/4$ and greater than $3\pi/4$, respectively. Since Lemma \ref{lem} does not apply in this case, we use the following observations.

\begin{lem}
	Let $n\ge 3$ be an odd integer and $P$ be a square. Let $s_1,s_2,s_3,s_4$ be the (closed) sides of $P$ ordered cyclically, where the indices are taken mod $4$. If $P$ can be tiled by convex polygons $T_1,\dots, T_n$ which are equiangular and the $T_i$ are triangles, then the following hold:
	\begin{enumerate}
		\item If a tile $T_i$ intersects a side $s_k$ in a segment of positive length and one of its vertices $v$ is in $s_{k+2}$, then the angle $\alpha$ of $T_i$ at $v$ satisfies $\alpha<\pi/2$.
		\item If a tile $T_i$ contains a side $s_k$ and intersects $s_{k+1}$ (or $s_{k-1}$) in a segment of positive length, then the angle $\alpha$ at the vertex of $T_i$ in $s_{k-1}$ (or $s_{k+1}$) of $T_i$ satisfies $\alpha\leq\pi/4$.
		\item A tile $T_i$ cannot have two sides $a,b$ such that $a\subset s_k$ and $b\subset s_{k+2}$ for some $k$.
	\end{enumerate}
	If instead the $T_i$ are quadrilaterals, then
	\begin{enumerate}
		\item[(4)] A tile intersecting three consecutive sides of $P$ in segments of positive length, cannot have an angle $\alpha \leq \pi/4$.
	\end{enumerate}
\end{lem}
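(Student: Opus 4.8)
The plan is to prove each of the four statements by a diameter/angle argument in the spirit of Lemma \ref{lem}, exploiting that all tiles are congruent (hence have equal diameter and equal area) and, in the triangle case, that each is a right or non-right triangle with a fixed angle triple.

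For part (1), suppose $T_i$ meets $s_k$ in a segment and has a vertex $v\in s_{k+2}$ with angle $\alpha\ge\pi/2$ there. Since $T_i$ is a triangle, the side of $T_i$ opposite $v$ (which lies along or near $s_k$) is its longest side, so $\diam T_i$ equals the length of that side, which is at most $1$ (the side length of the square) and is achieved only if that side spans all of $s_k$; meanwhile the perpendicular-ish distance from $s_k$ to $v\in s_{k+2}$ forces the foot configuration to pin $T_i$ against a thin region. I would make this precise by cutting $P$ with the line through $v$ perpendicular to $s_k$: this splits $P\setminus T_i$ into pieces, at least one of which is contained in a strip of width $<1$ in the direction of $\diam T_i$, so any congruent tile inside it has diameter $<\diam T_i$, a contradiction — essentially the same two-component argument used for part (2) of Lemma \ref{lem}. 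Part (3) is identical to part (3) of Lemma \ref{lem} except we no longer have property (4) available; but if $T_i$ has a side on $s_k$ and a side on $s_{k+2}$ then $T_i$ is contained in the slab between these opposite sides of $P$, so its two sides on the boundary have combined length $\le$ something forcing, again by a diameter comparison with the tile at a corner of $P$, a contradiction — alternatively one observes a triangle cannot have two parallel sides at all, so in the triangle case (3) is immediate, and in the quadrilateral case one repeats the corner-incidence dichotomy from Lemma \ref{lem}, using that a tiling-vertex at a corner must receive an angle $\le\pi/2$ while a tile with two sides on opposite sides of $P$ has all angles $>$ the relevant threshold.

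For part (2): $T_i$ contains all of $s_k$ and meets $s_{k+1}$ in a positive-length segment. Then $T_i$ contains the corner $s_k\cap s_{k+1}$ (call it $c$), the side of $T_i$ through $c$ along $s_k$ has length $1$, and the vertex $v$ of $T_i$ lying on $s_{k-1}$ is at distance $1$ (perpendicularly) from $s_{k+1}$. In the triangle $T_i$ the vertex $v$ sees the opposite side (the one on $s_k$) of length exactly $1=\diam P$, so $\diam T_i = 1$; projecting, the angle at $v$ is the angle subtending a unit segment from a point at perpendicular distance $1$, which is at most $\arctan(1/1)+$ correction — here one checks directly that if this angle exceeds $\pi/4$ the tile cannot fit inside $P$ together with a second tile of the same diameter filling the remaining corner region, again via the connected-components argument. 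Part (4), for quadrilateral tiles, is the exact analogue of part (3) of Lemma \ref{lem}: a tile $T_i$ meeting three consecutive sides $s_{k-1},s_k,s_{k+1}$ contains both corners of $s_k$, hence has diameter $\ge\sqrt{1+r^2}$ or is forced against two opposite sides of $P$; if additionally some angle of $T_i$ is $\le\pi/4$ then the tile at the far corner of $P$ has strictly smaller diameter, contradicting congruence. One must be slightly careful distinguishing the sub-case where a corner of $P$ is covered by a single tile (forcing a right angle there, and then a neighboring angle controlled by the congruence) from the sub-case where every corner is multiply covered (forcing some angle $\le\pi/4$ to sit at an interior vertex, which conflicts with the $\le\pi/4$ hypothesis on $T_i$ via an angle-sum count at that vertex, exactly as in Lemma \ref{lem:aror}).

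The main obstacle I anticipate is part (1) and part (2): unlike Lemma \ref{lem}, where "the diameter equals the diameter of $P$" was automatic because a tile touched two opposite corners, here a triangular tile merely touches one side and reaches the opposite side, so one has to argue more carefully that the relevant edge of $T_i$ is long enough to make the complementary region too thin to host a congruent copy. The cleanest route is probably: let $D=\diam T$ and note every tile has diameter exactly $D$; the tile $T_i$ in question has an edge $e$ whose endpoints lie on (or whose line separates) $s_k$ and $s_{k+2}$, so $e$ has length $\ge 1$, hence $D\ge 1$, hence $D=1$ (as $D\le\sqrt{1+r^2}$ but actually $D\le\mathrm{side}$ by the earlier parts being applied first, or $D=1$ when $P$ is a square); then $T_i$ occupies a set of diameter exactly $1$ inside the unit square, and a short case analysis on where such a set can sit — combined with the connected-components argument from Lemma \ref{lem} — rules out the forbidden angle. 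I would present parts (3) and (4) first (they are short and parallel to existing lemmas), then (2), then (1) last as the technically heaviest.
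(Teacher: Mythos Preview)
Your entire approach rests on the claim that the tiles are congruent and hence have equal diameter and equal area. But the lemma is about \emph{equiangular} tiles, not congruent ones (this is the whole point of Section~\ref{sec:equiangSec}). For triangles, equiangular means similar; for quadrilaterals it does not even imply similarity. In neither case do the tiles need to share a common diameter or area, so every ``a congruent copy cannot fit in the remaining region'' argument collapses. This kills your proposed proofs of (1), (2), and (4). Your observation for (3) that a triangle cannot have two parallel sides is correct and is exactly what the paper uses.

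The paper's proofs are far simpler and entirely local to the single tile $T_i$; congruence plays no role. For (1): the two sides of $T_i$ adjacent to $v$ join $v\in s_{k+2}$ to the endpoints of the side lying in $s_k$, so each has length $\ge 1$, while the side in $s_k$ has length $\le 1$; thus $\alpha$, being opposite the shortest side, satisfies $\alpha\le\pi/3<\pi/2$. For (2): since $T_i$ contains $s_k$ and meets $s_{k+1}$ in a segment, $T_i$ is a right triangle with legs of length $1$ (along $s_k$) and $\le 1$ (along $s_{k+1}$); the angle at the vertex in $s_{k-1}$ is opposite the shorter leg, hence $\le\pi/4$. For (4): the quadrilateral $T_i$ must have $s_{k+1}$ as one of its sides, with right angles at both endpoints; the remaining two vertices lie on $s_k$ and $s_{k+2}$, and a direct computation shows each of the two non-right angles exceeds $\pi/4$. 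None of this requires comparing $T_i$ to any other tile.
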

\begin{proof}
	In (1), the two sides adjacent to $v$ have length at least $1$, so the side in $s_{k}$ is the smallest side of $T_i$. Thus, the smallest angle of $T_i$ is $\alpha$ and therefore $\alpha\le\pi/3<\pi/2$.
    Statement (2) is straightforward from the fact that $T_i$ is a right triangle and $\alpha$ is its smallest angle.
    Statement (3) follows from the fact that, in a triangle, every side is adjacent to the other two.
    Finally, in (4), suppose $T_i$ intersects $s_k$, $s_{k+1}$ and $s_{k+2}$. Since $T_i$ is a quadrilateral, $s_{k+1}$ must be one of its sides and the vertices of $T_i$ which are not in $s_{k+1}$ are in $s_k$ and $s_{k+2}$. Therefore, the two angles incident to $s_{k+1}$ are right angles and each of the other two are greater than $\pi/4$.
\end{proof}

After running the modified program, we are left with $15$ graphs in the triangular case and $27$ graphs in the quadrilateral case. Of these, $3$ graphs for the triangular case and $8$ graphs for the quadrilateral do not produce valid tilings. This can easily be checked by hand. All these graphs can be seen in our repository, it is also possible to examine the unrealizable and unfiltered graphs.

\section{Final remarks}\label{sec:remarks}

We decided to use Python in order to make the code more accessible. The execution time turned out not to be an issue. Since each graph can be explored separately, the code is easily parallelizable. Proving Theorem \ref{thm} for $n=9$ takes a couple of days in a modern home computer, the rest of cases take at most a few of hours.

In Theorems \ref{thm} and \ref{thm:rect}, the number of tiles may not be increased using our method. The number of graphs to be analyzed is simply too large. As for the equiangular case with $7$ tiles, our algorithm produces around $2000$ possible valid graphs, however we were unable to find a systematic way of deciding whether a graph is valid or not.

Notice that this method can be used to find tilings (using congruent or equiangular pieces) of other polygons. We may use other observations in the way of Lemma \ref{lem} to optimize the process. It might be necessary to consider labellings using different angle-types. What is important is to check that the underlying graph is $3$-connected or that there is some other condition which allows us to generate the list of possible graphs.

Returning to the general conjecture, we noticed that when the angles of $T$ are (cyclically) labeled as $\aaa\rrr\ooo\rrr$, we usually had to go deeper in the tree of possibilities before we could completely discard a given pair $(G,S)$. We tried to rule out this case in general but failed. It seems that the corresponding tiles have more structure so it may be possible to discard them by other means.  

\section{Acknowledgments}

We are thankful to the anonymous referee for his careful reading and fruitful suggestions.
During this research the first author was supported by a CONACyT scholarship and the second author was supported by CONACyT project 282280.

\bibliographystyle{amsalpha}
\bibliography{biblio}

\newcommand{\etalchar}[1]{$^{#1}$}
\providecommand{\bysame}{\leavevmode\hbox to3em{\hrulefill}\thinspace}
\providecommand{\MR}{\relax\ifhmode\unskip\space\fi MR }
\providecommand{\MRhref}[2]{%
  \href{http://www.ams.org/mathscinet-getitem?mr=#1}{#2}
}
\providecommand{\href}[2]{#2}
\begin{thebibliography}{MSP{\etalchar{+}}17}

\bibitem[BM07]{BMo2007}
Gunnar Brinkmann and Brendan~D. McKay, \emph{Fast generation of planar graphs},
  MATCH Commun. Math. Comput. Chem. \textbf{58} (2007), no.~2, 323--357.

\bibitem[Gr{\"u}07]{Gru07}
Branko Gr{\"u}nbaum, \emph{Graphs of polyhedra; polyhedra as graphs}, Discrete
  Mathematics \textbf{307} (2007), no.~3-5, 445--463.

\bibitem[HSS08]{HSS2008}
Aric~A. Hagberg, Daniel~A. Schult, and Pieter~J. Swart, \emph{Exploring network
  structure, dynamics, and function using {NetworkX}}, Proceedings of the 7th
  Python in Science Conference (Pasadena, CA USA) (Ga\"el Varoquaux, Travis
  Vaught, and Jarrod Millman, eds.), 2008, pp.~11 -- 15.

\bibitem[Mal91]{Mal1991}
Samuel~J. Maltby, \emph{Problem 875}, Crux Mathematicorum \textbf{5} (1991),
  no.~17, 141--146.

\bibitem[Mal94]{Mal1994}
\bysame, \emph{Trisecting a rectangle}, Journal of Combinatorial Theory, Series
  A \textbf{66} (1994), no.~1, 40--52.

\bibitem[Mon70]{Mon1970}
Paul Monsky, \emph{On dividing a square into triangles}, The American
  Mathematical Monthly \textbf{77} (1970), no.~2, 161--164.

\bibitem[MSP{\etalchar{+}}17]{MSP+2017}
Aaron Meurer, Christopher~P. Smith, Mateusz Paprocki, Ond\v{r}ej
  \v{C}ert\'{i}k, Sergey~B. Kirpichev, Matthew Rocklin, AMiT Kumar, Sergiu
  Ivanov, Jason~K. Moore, Sartaj Singh, Thilina Rathnayake, Sean Vig, Brian~E.
  Granger, Richard~P. Muller, Francesco Bonazzi, Harsh Gupta, Shivam Vats,
  Fredrik Johansson, Fabian Pedregosa, Matthew~J. Curry, Andy~R. Terrel,
  \v{S}t\v{e}p\'{a}n Rou\v{c}ka, Ashutosh Saboo, Isuru Fernando, Sumith Kulal,
  Robert Cimrman, and Anthony Scopatz, \emph{{SymPy}: symbolic computing in
  {P}ython}, PeerJ Computer Science \textbf{3} (2017), e103.

\bibitem[RRW20]{RRW2020}
Hui Rao, Lei Ren, and Yang Wang, \emph{Dissecting a square into congruent
  polygons}, Discrete Mathematics \& Theoretical Computer Science \textbf{22}
  (2020), no.~1, 1--17.

\bibitem[Tho68]{Tho1968}
John Thomas, \emph{A dissection problem}, Mathematics Magazine \textbf{41}
  (1968), no.~4, 187--190.

\bibitem[YZZ16]{YZZ2016}
Liping Yuan, Carol~T. Zamfirescu, and Tudor~I. Zamfirescu, \emph{Dissecting the
  square into five congruent parts}, Discrete Mathematics \textbf{339} (2016),
  no.~1, 288--298.

\end{thebibliography}

\end{document}